\documentclass[%
 aip,
 amsmath,amssymb,
reprint, onecolumn 
]{revtex4-1}

\usepackage{graphicx}
\usepackage{dcolumn}
\usepackage{bm}

\usepackage[utf8]{inputenc}
\usepackage[T1]{fontenc}
\usepackage{mathptmx}
\usepackage{etoolbox}
\usepackage{xcolor}
\usepackage{amsthm}

\newtheorem{proposition}{Proposition}
\newtheorem{lemma}{Lemma}

\makeatletter
\def\@email#1#2{%
 \endgroup
 \patchcmd{\titleblock@produce}
  {\frontmatter@RRAPformat}
  {\frontmatter@RRAPformat{\produce@RRAP{*#1\href{mailto:#2}{#2}}}\frontmatter@RRAPformat}
  {}{}
}%
\makeatother
\begin{document}

\title{Metric Congruence in Finite-Dimensional Non-Hermitian Quantum Mechanics}
\author{Romina Ram\'\i rez}
 \email{romina@mate.unlp.edu.ar}
 \affiliation{IAM, CONICET--CeMaLP, University of La Plata, Argentina.}
\author{Marta Reboiro}%
 \email{reboiro@fisica.unlp.edu.ar}
\affiliation{IFLP, CONICET--Department of Physics, University of La Plata, Argentina}%

\date{\today}

\begin{abstract}
We study metric representations in finite-dimensional non-Hermitian quantum mechanics. 
The main purpose of this work is to emphasize that the physical description of a 
non-Hermitian system may be formulated in different, but isomorphic, Hilbert spaces. 
In particular, within the Krein-space formalism, we show that the vector space endowed with an indefinite Krein metric can be explicitly related to the standard Hilbert space through a suitable isomorphism.

This observation is essential for a consistent description of non-Hermitian Hamiltonians. 
Physical states, metrics, and operators must be transported through the corresponding Hilbert-space isomorphism. In this way, equivalent representations of the same system can be used without changing the physical content of the theory.

We illustrate these theoretical aspects by studying a two-level non-Hermitian spin model. We use the Robertson uncertainty relation as a consistency test. Apparent violations can arise when operators and states are kept fixed while the metric is changed, and therefore reflect a mismatch of representations rather than a failure of the uncertainty principle.

\end{abstract}

\maketitle
\section{Introduction}

Non-Hermitian Hamiltonians occur in several quantum settings  \cite{Ashida02072020,xue1,epjd2020,Elganainy}. In open dynamics, they are often used as effective generators  \cite{nori,chen2021quantum,joglekar}.

When these Hamiltonians are pseudo-Hermitian, exceptional points \cite{kato1966perturbation,Heiss_2012} and dynamical phase transitions take place \cite{pastawski,Bergholtz,Rahmani,Fu}.

The isomorphism between two Hilbert spaces \(\mathcal{H}_{1}\) and
\(\mathcal{H}_{2}\), endowed with the metric operators \(\eta_1\) and
\(\eta_2\), respectively, means the existence of an invertible operator
\(T:\mathcal{H}_{1}\to \mathcal{H}_{2}\) such that
\[
\langle Tx,\eta_2 T y\rangle
=
\langle x,\eta_1 y\rangle,
\qquad
\forall x,y\in\mathcal{H}_{1}.
\]
Equivalently, $T^\dagger \eta_2 T=\eta_1$. Changing the metric without changing the representatives of
states and operators mixes different Hilbert-space structures. This is
the source of many apparent inconsistencies in the treatment of non-Hermitian models \cite{mandel}. The simultaneous transport ensures the equivalence between Hilbert spaces defined by inner products induced by $\eta_{1}$ and $\eta_{2}$.

Several prescriptions are used in practice. Some authors use the original
inner product and treat the non-Hermitian Hamiltonian as an effective
generator of nonunitary evolution, as in open-system or postselected
dynamics \cite{joglekar}. Other authors use the biorthogonal
description. If the Hamiltonian is diagonalizable, right and left
eigenvectors provide a biorthogonal resolution of the identity. When the
spectrum is real, this structure can be used to construct a positive
metric with respect to which the Hamiltonian is self-adjoint
\cite{brody2013biorthogonal,BenderBrodyJones2002}. In that case, the biorthogonal and quasi-Hermitian formulations are closely related.

Other prescriptions are based on the explicit construction of a physical
metric operator. This includes formulations in which an additional operator,
usually denoted by \(\mathcal C\), is used to define a positive inner product
\cite{BenderBrodyJones2002}. The situation becomes more
subtle when the metric is time dependent, since in that case the metric,
states, observables, and dynamical generator must be treated consistently
within the same time-dependent Hilbert-space representation \cite{Mostafazadeh2007TimeDep,fringtm1}.

For a pseudo-Hermitian Hamiltonian, there exists a similarity operator $S=S^{\dagger}$ such that $SH= H^{\dagger}S$. Depending on the model parameter space associated with a given Hamiltonian, the spectrum consists of real eigenvalues or the appearance of complex conjugate pairs. They correspond to the so-called exact symmetry phase and spontaneously broken symmetry phase. The boundary between the two phases consists of exceptional points.

In the exact-symmetry phase, one can construct a positive-definite metric operator $S$ obeying $SH=H^{\dagger} S$, which can be taken as a metric inducing a new Hilbert space. This is not the case for the spontaneously broken-symmetry regime and exceptional points. Starting from the similarity operator $S$, the metric can be defined by using the Krein formalism.

If
\begin{equation}
    S=VDV^\dagger ,
\end{equation}
with \(D\) real and nonsingular, we define
\begin{equation}
    K=V \tilde{D}V^\dagger .
\end{equation}
where $\tilde{D}$ is the absolute value of the diagonal matrix $D$.
This Krein-positivized operator \(K\) defines a Hilbert-space
metric. However, 
\begin{equation}
    KH \neq H^\dagger K .
\end{equation}

In finite dimension, any two metrics are related by a
congruence transformation:
\begin{equation}
    \eta_1=T^\dagger \eta_2 T.
\end{equation}
To be consistent, under this change states and operators must transform as
\begin{equation}
    x\mapsto Tx,
    \qquad
    A\mapsto TAT^{-1}.
\end{equation}

We shall use uncertainty relations to test this consistency requirement.
The Heisenberg--Robertson relation bounds from below the product of variances
by the modulus of the expectation value of the commutator
\cite{Heisenberg1927,Robertson}. Schrödinger's refinement strengthens this
bound by incorporating the covariance contribution \cite{Schrodinger1930}. Product-form relations can become trivial when one
variance vanishes, even if the observables are not jointly sharp on the
state. This limitation motivates sum-variance relations \cite{MacconePati2014}. These stronger relations have also been investigated in experimental and operational
contexts, showing their relevance beyond purely formal considerations
\cite{uncert,exp2URNH,URNH}.

In this direction, the authors of Ref.\cite{mandel} introduced the notion of ``good observables'' in finite-dimensional non-Hermitian systems endowed with a $G$-metric inner product \cite{mandel}. In their formulation, an operator $O$ is physically admissible when it satisfies $O^\dagger G=GO$, and the corresponding uncertainty relation is valid for pairs of such metric-compatible observables. From the point of view developed here, this condition can be understood geometrically as the self-adjointness condition in the Hilbert space defined by the metric $G$.

Although the isomorphism  between finite-dimensional Hilbert spaces is a well-known concept in linear algebra, the scope of this work is to provide a way to compare the abundant number of metrics adopted in the literature around the treatment of non-Hermitian Hamiltonians and the consistency of physical results.

The paper is organized as follows. Section~\ref{formalismo} fixes the
notation for metric representations and recalls the construction of
intertwiners in different spectral regimes. Section~\ref{theorem}
proves the transport rule for metrics related by congruence. In particular, this section studies the relation between an indefinite intertwiner \(S\), its positive transformed operator \(K\), and the condition for \(K\) to be a quasi-Hermitian metric for the Hamiltonian. Operator $K$ allows the construction of a new Hilbert space $\mathcal{H}_{K}$. We show that physical results computed in this new Hilbert space are equivalent to the ones computed in the original Hilbert space $\mathcal{H}$.

Section~\ref{two-level-model} applies the formalism to a two-level
non-Hermitian spin model and distinguishes the unbroken, broken, and
exceptional-point regimes. Section~\ref{uncertainty} analyzes
Robertson and sum-variance uncertainty relations under metric transport. Conclusions are given in Section~\ref{conclusiones}.

\section{Formalism}\label{formalismo}

Consider a finite dimensional Hilbert space  \(\mathcal H\simeq\mathbb C^n\)  equipped with the canonical inner product $\langle .,.\rangle$.
Let $A$ be an operator for which there exists a self-adjoint operator $\eta$ such that 
\begin{equation}
    \eta A = A^{\dagger}\eta 
\end{equation}
This condition implies that $A$ is a pseudo-Hermitian operator.
If $\eta$ is  positive definite, i.e $\eta= \rho^{\dagger}\rho$, it can be used as a metric to define a new inner product
\begin{equation}\label{eta}
    \langle .,. \rangle_{\eta}=\langle .,\eta . \rangle
\end{equation}
In this case, operator $A$ is self-adjoint in $\mathcal{H}_{\eta}= (\mathcal{H}, \langle .,. \rangle_{\eta})$ where

\begin{equation}
    \langle ., A. \rangle_{\eta} =\langle ., \eta A. \rangle = \langle ., A^{\dagger } \eta. \rangle =\langle A ., . \rangle_{\eta}.
\end{equation}
It is straightforward to show that there exists an isomorphism between this Hilbert space and the original one, provided that the operator $A$ is related to a Hermitian operator  
\begin{equation}
    a=\rho A \rho^{-1}
\end{equation}
in the original Hilbert space \cite{Mostafazadeh1,Mostafazadeh2,Mostafazadeh3}. 

This observation also clarifies the meaning of the good-observable condition $O^\dagger G=GO$ used in Ref.~\cite{mandel}. If $G=\rho^\dagger \rho$, then an operator $O$ satisfying this condition is self-adjoint in the metric Hilbert space $\mathcal H_G$ and is mapped by the isomorphism to the Hermitian operator $\rho O\rho^{-1}$ in the reference Hilbert space. Conversely, every Hermitian observable in the reference representation defines a good observable after the inverse transport.

This is consistent with the standard transformation rule under a change
of basis. When passing from one basis to another, states and operators
must be represented in the same basis. Hence, any operator defined in the
initial representation has to be transformed to the new representation
before its adjointness properties, expectation values or observable
character can be evaluated.

For a pseudo-Hermitian Hamiltonian in the parameter region corresponding to the exact-symmetry phase, it is possible to construct a positive-definite metric operator.

In the spontaneously broken-symmetry regime and at exceptional points it is not possible to define a positive $\eta$ that serves as a similarity operator between $A$ and $A^{\dagger}$.

In such cases, there is no unique canonical prescription in the literature. Instead, different prescriptions have been adopted.

One possible prescription is to keep the initial Hilbert-space structure
and study the no-jump effective non-Hermitian Hamiltonian associated with a Lindblad dynamics \cite{nori,joglekar}. In this scheme,  the trace of the density matrix is no longer preserved and a renormalization is needed at each instant of time. 

Another possibility is to consider the so-called biorthonormal framework \cite{brody2013biorthogonal,paolo}. In this approach, a diagonalizable non-Hermitian operator $A$ is described by right and left eigenvectors,
\begin{equation}
            A |R_n\rangle = E_n |R_n\rangle,
        \qquad
        A^\dagger |L_n\rangle = E_n^* |L_n\rangle ,
\end{equation}

normalized according to
\begin{equation}\label{ort}
        \langle L_m|R_n\rangle=\delta_{mn}.
\end{equation}
with the resolution of the identity
\begin{equation}
        \sum_n |R_n\rangle\langle L_n|=I,
\end{equation}

Thus, any state in $\mathcal{H}$ can be expanded as
\begin{equation}
        |\psi\rangle=\sum_n \langle L_n|\psi\rangle |R_n\rangle .
\end{equation}
The inner product calculated in Eq.(\ref{ort}), can be obtained by introducing the operator 
\begin{equation}\label{S}
S = \sum |L_n\rangle \langle L_n |
\end{equation}
such that $\langle L_m|R_n\rangle=\langle R_m|R_n\rangle_{S}=\delta_{mn}$. 

The operator \(S\) defines a suitable metric for the treatment of
pseudo-Hermitian Hamiltonians with real spectra, and can also be considered
for operators that do not belong to the pseudo-Hermitian class.
It therefore defines the metric Hilbert space
\[
        \mathcal H_{S}
        =
        \bigl(\mathcal H,\langle\cdot,\cdot\rangle_{S}\bigr),
        \qquad
        \langle f,g\rangle_{S}
        =
        \langle f,S g\rangle .
\]

The construction of such  operator $S$ was studied in \cite{mostafaMetric,Ram19}, where a finite-dimensional framework was developed for non-Hermitian dynamics. The central idea is
that the construction of the metric, and hence of the inner product used
to evaluate expectation values, depends on the spectral regime of the
Hamiltonian. 

In the parameter regime corresponding to the spontaneously broken phase, i.e. when the spectrum contains non-degenerate complex-conjugate pairs, the previous positive construction is no longer sufficient. One can instead
construct a self-adjoint intertwining operator by pairing eigenvectors
associated with conjugate eigenvalues. Schematically, such an operator
has the form
\begin{equation}
        S=
        \sum_{j\leq i}
        \delta(\overline E_j-\overline E_i^{\,*})
        \left(
        \alpha_j
        |L_j\rangle
        \langle L_i|
        +
        \alpha_j^*
        |L_i\rangle
        \langle L_j|
        \right),
\end{equation}
with suitable complex coefficients. This operator restores the
intertwining relation
\begin{equation}
        SA=A^\dagger S,
\end{equation}
but it is not positive definite. Hence it defines an indefinite inner
product rather than a Hilbert-space metric.

This leads naturally to a Krein-space interpretation. Since \(S\) is
self-adjoint, its spectral subspaces can be separated into positive and
negative parts. Thus one may write
\[
        \mathcal H=\mathcal H_+\oplus\mathcal H_-,
\]
and decompose
\[
        S=S_+ + S_-,
\]
where \(S_+\) is positive on \(\mathcal H_+\) and \(S_-\) is negative on
\(\mathcal H_-\). A positive operator can then be obtained by reversing
the sign on the negative subspace:
\begin{equation}
        K=S_+-S_- .
\end{equation}
Equivalently, if
\begin{equation}
        S=V D V^\dagger,
\end{equation}
with \(D\) real and nonsingular, one may define
\begin{equation}
        K=V|D|V^\dagger ,
\end{equation}
where \(|D|\) is obtained by replacing each eigenvalue of \(D\) by its
absolute value. The operator \(K\) is self-adjoint and positive
definite, and therefore defines the Hilbert space
\[
        \mathcal H_{K}
        =
        \bigl(\mathcal H,\langle\cdot,\cdot\rangle_{K}\bigr),
        \qquad
        \langle x,y\rangle_{K}
        =
        \langle x,K y\rangle .
\]
However, the positivity of \(K\) should not be confused with
pseudo-Hermiticity of \(A\) with respect to \(K\). In general,
\begin{equation}
        K A\neq A^\dagger K .
\end{equation}
Thus the Krein-positivized metric defines a positive Hilbert-space
geometry, but it does not necessarily make the original Hamiltonian
self-adjoint in that metric. This distinction is crucial when comparing
expectation values, observables and uncertainty relations across metric
representations.

At exceptional points the situation is even more singular. In finite
dimension, an exceptional point is a set in the parameter space at which eigenvalues
and eigenvectors coalesce and the Hamiltonian is no longer diagonalizable.
The operator must then be represented by its Jordan decomposition,
\begin{equation}
        A=\widetilde P B\widetilde P^{-1},
\end{equation}
where the columns of \(\widetilde P\) are generalized eigenvectors  $\{ |\overline{v}_{i} \rangle \} $ and $B$ a Jordan block matrix. In
this case, the metric construction must be formulated in terms of Jordan
chains and dual generalized eigenvectors. One introduces a generalized
symmetry operator \(S_J\), schematically written as
\begin{equation}\label{SB}
        S_B=
        \sum_{j\leq i}
        \delta(\overline E_j-\overline E_i^{\,*})
        \left(
        \alpha_j
        |L_j\rangle
        \langle\overline v_i|
        +
        \alpha_j^*
        |L_i\rangle
        \langle\overline v_j|
        \right),
\end{equation}
which satisfies
\begin{equation}
        S_B A=A^\dagger S_B .
\end{equation}
As in the broken diagonalizable case, \(S_B\) is generally indefinite.
After separating its positive and negative spectral parts, one obtains a
positive operator
\begin{equation}
        S_{K}=S_{B,+}-S_{B,-},
\end{equation}
which defines an inner product adapted to the Jordan representation.

Notice that
\begin{equation}\label{eq:S-KJ}
    S=KJ=JK,
\end{equation}
where $J$ is a fundamental symmetry satisfying $J^2=I$, $J^\dagger=J$,
 and  $[J,K]=0$.
The indefinite form associated with \(S\) can therefore be written as
\begin{equation}\label{eq:indefinite-form-KJ}
    [x,y]_S=  \langle x,Jy\rangle_K.
\end{equation}

\begin{proposition}\label{Jcriterion}
Let \(S=S^\dagger\) be invertible and assume that $SA=A^\dagger S $.
Let \(S=KJ\) be the decomposition defined in \eqref{eq:S-KJ}, with
 \(J=\operatorname{sgn}(S)\). If
\[
        A^{\sharp_K}:=K^{-1}A^\dagger K
\]
denotes the adjoint with respect to the positive operator \(K\), then
\begin{equation}\label{eq:main-identity}
        A^{\sharp_K}=JAJ .
\end{equation}
Consequently, \(K\) is a quasi-Hermitian metric for the original
operator \(A\) if and only if
\begin{equation}\label{eq:J-compatibility}
        [J,A]=0 .
\end{equation}
\end{proposition}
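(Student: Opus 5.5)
The plan is a direct algebraic computation based on the polar-type factorization $S=KJ=JK$ from \eqref{eq:S-KJ}. First I record the structural facts. Write $S=VDV^\dagger$ with $V$ unitary and $D$ real, diagonal and nonsingular. Then $K=V|D|V^\dagger$ and $J=V\operatorname{sgn}(D)V^\dagger$. From this form one reads off directly that $K=K^\dagger>0$ is invertible, $J=J^\dagger=J^{-1}$, and $JK=KJ=S$. In particular $K=JS=SJ$ and $K^{-1}=S^{-1}J=JS^{-1}$, where $K^{-1}$ and $J$ commute because both are functions of $S$.

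The main identity \eqref{eq:main-identity} then comes from substitution. The intertwining relation $SA=A^\dagger S$ gives $A^\dagger=SAS^{-1}$. Substituting,
\[
A^{\sharp_K}=K^{-1}A^\dagger K=(JS^{-1})(SAS^{-1})(SJ)=JAJ .
\]
Here I use $K^{-1}=JS^{-1}$ on the left and $K=SJ$ on the right. These are legitimate because $J$ commutes with $S$, so both orderings are available.

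For the equivalence, recall that $K$ is a quasi-Hermitian metric for $A$ exactly when $KA=A^\dagger K$, i.e. when $A^{\sharp_K}=A$. By the identity just proved this is $JAJ=A$. Multiplying on the left by $J$ and using $J^2=I$ turns it into $AJ=JA$, i.e. $[J,A]=0$. Each step is reversible, so both directions of the "if and only if" follow at once.

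The main obstacle is modest: the paper defines $J$ loosely, only through \eqref{eq:S-KJ}. I would make precise that $J=\operatorname{sgn}(S)$ is defined by functional calculus, so that the commutation of $J$ with $S$, $K$ and $K^{-1}$ is automatic and $J$ is uniquely determined; invertibility of $S$ ensures $\operatorname{sgn}$ is well defined with no zero eigenvalues. With this in place, the remaining steps are routine.
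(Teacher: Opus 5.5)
Your proposal is correct and follows the paper's proof: both substitute $A^\dagger=SAS^{-1}$ and rewrite $S$ through $S=KJ$ and $J^2=I$. The paper uses $S^{-1}=JK^{-1}$, while you use $K=SJ$ and $K^{-1}=JS^{-1}$; either way one obtains $K^{-1}A^\dagger K=JAJ$ and then concludes $JAJ=A\iff[J,A]=0$. One minor remark: $K=SJ$ and $K^{-1}=JS^{-1}$ already follow from $S=KJ$ and $J^{-1}=J$ alone, so the commutation of $J$ with $S$ that you invoke to justify them is not actually needed at that step.
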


\begin{proof}
The proof is an immediate consequence of the preceding identities. From
\(SA=A^\dagger S\) and the invertibility of \(S\), one has
\[
        A^\dagger=SAS^{-1}.
\]
Using \(S=KJ\), \(J^2=I\), and \(S^{-1}=JK^{-1}\), this gives
\[
        A^\dagger=(KJ)A(JK^{-1}).
\]
Therefore
\[
        A^{\sharp_K}
        =K^{-1}A^\dagger K
        =JAJ .
\]
Hence \(A^{\sharp_K}=A\) holds exactly when \(JAJ=A\), or equivalently when
\([J,A]=0\).
\end{proof}

Finally, if \(A\) is a general non-Hermitian Hamiltonian that is not
pseudo-Hermitian, its spectrum is not necessarily real or arranged in
complex-conjugate pairs, and \(A\) and \(A^\dagger\) are not isospectral
in the pseudo-Hermitian sense. In this case one may still introduce a
positive operator
\begin{equation}
        S_g=
        \sum_{j=1}^{N_{\max}}
        |L_j\rangle
        \langle L_j| ,
\end{equation}
and define
\begin{equation}
        \langle f,g\rangle_{S_g}
        =
        \langle f|S_g g\rangle .
\end{equation}
However,
\begin{equation}
        S_g A\neq A^\dagger S_g .
\end{equation}
Therefore \(S_g\) provides a positive inner product for evaluating
expectation values, but it does not turn \( A\) into a pseudo-Hermitian
operator.

Whatever Hilbert-space structure is chosen, and whatever metric is used to define the corresponding inner product, finite-dimensionality ensures that these metric descriptions can be related. More precisely, all positive definite metrics on a finite-dimensional vector space are connected by congruence. Hence, different metric representations may be identified,
but only after specifying how the associated states and operators are transformed.
We shall make this identification explicit by describing the consistent transport of metrics, states and operators between different representations. This is essential for the analysis of physical quantities, such as expectation values and uncertainty relations

\section{Relation between $\mathcal{H}$ and $\mathcal{H}_{K}$}\label{theorem}

Let \(\eta_1>0\) and \(\eta_2>0\) be the matrix representation of a positive definite sesquilinear form. It is known that there exists \(T\in GL(n)\) such that
\begin{equation}
        \eta_1=T^\dagger\eta_2T .
\end{equation}
In fact, we can take
\begin{equation}
        T=\eta_2^{-1/2}\eta_1^{1/2}.
\end{equation}
or a self-adjoint choice
\begin{equation}\label{eq:T-general}
T=\eta_2^{-1/2}\left(\eta_2^{1/2}\eta_{1}\eta_2^{1/2}\right)^{1/2}\eta_2^{-1/2},
\end{equation}

\begin{proposition}\label{prop1}
Let \(\eta_1>0\) and \(\eta_2>0\) be two metrics on
\(\mathbb C^n\), and $T$ such that
    \begin{equation}
        \eta_1=T^\dagger \eta_2 T
  \end{equation}
Then, the transformation
\[
        x\mapsto \tilde{x}=Tx,
        \qquad
        A\mapsto \tilde{A}:=TAT^{-1}
\]
preserves all metric matrix elements:
\begin{equation}
        \langle \tilde{x},\tilde{A} \tilde{y}\rangle_{\eta_2} =  \langle x,Ay\rangle_{\eta_1}.
    \end{equation}
\end{proposition}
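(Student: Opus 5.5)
The plan is a direct computation: unfold the $\eta_2$-inner product into the canonical one, move $T$ across the canonical inner product as $T^\dagger$, and use the congruence $\eta_1=T^\dagger\eta_2T$.

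By \eqref{eta}, $\langle u,v\rangle_{\eta_2}=\langle u,\eta_2 v\rangle$. Substituting $\tilde x=Tx$ and $\tilde A\tilde y=TAT^{-1}Ty$, the left-hand side becomes
\[
\langle \tilde x,\tilde A\tilde y\rangle_{\eta_2}=\langle Tx,\eta_2\,TAT^{-1}T y\rangle .
\]
The first step is the cancellation $T^{-1}T=I$, which is available because $T\in GL(n)$. This reduces the expression to $\langle Tx,\eta_2 T\,Ay\rangle$.

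Next, I will transfer $T$ to the second slot using the defining property of the canonical adjoint, $\langle Tu,v\rangle=\langle u,T^\dagger v\rangle$. This gives $\langle x,T^\dagger\eta_2T\,Ay\rangle$. The hypothesis $T^\dagger\eta_2T=\eta_1$ then turns this into $\langle x,\eta_1 Ay\rangle=\langle x,Ay\rangle_{\eta_1}$, which is the claim.

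There is no real obstacle. The only points to check are that $T$ is invertible, so that $\tilde A$ is defined, and that the adjoint used is the canonical one on $\mathbb C^n$, which matches how the congruence is written. I will also remark on two special cases:
\begin{itemize}
\item Taking $A=I$ shows that the map $x\mapsto Tx$ is an isometric isomorphism $\mathcal H_{\eta_1}\to\mathcal H_{\eta_2}$.
\item The identity $\tilde A^{\sharp_{\eta_2}}=T A^{\sharp_{\eta_1}}T^{-1}$ follows by the same calculation, so self-adjointness is preserved under the transport.
\end{itemize}
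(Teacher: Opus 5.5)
Your direct computation is correct and is the argument the paper relies on: the paper gives no separate proof of Proposition~\ref{prop1} and treats it as immediate from $\langle Tx,\eta_2 Ty\rangle=\langle x,\eta_1 y\rangle$, i.e.\ $T^\dagger\eta_2T=\eta_1$, as in the introduction; your side remarks on the isometry and on $\tilde A^{\sharp_{\eta_2}}=TA^{\sharp_{\eta_1}}T^{-1}$ also check out. One small point: you do not need to assume $T\in GL(n)$ separately, since $\langle Tx,\eta_2Tx\rangle=\langle x,\eta_1x\rangle>0$ for $x\neq 0$ already forces $\ker T=\{0\}$.
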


The congruence relation between metrics, together with the corresponding
transformation of states and operators described before,
can be applied to the case of the metric \(K\) with respect to the initial metric structure of the space, defined by the inner product
\(\langle \cdot,\cdot\rangle\).

The metric \(K\) may also be represented in initial Hilbert-space
coordinates. With the spectral notation above, choose
\begin{equation}\label{eq:W-def}
    T:=V \tilde{D}^{-1/2}.
\end{equation}
Then \(T^\dagger K T=I\). Accordingly, a \(K\)-state \(\tilde{x}\) is represented by \(x_{\rm }=T^{-1}\tilde{x}\), and a \(K\)-observable \(A\) by
\(A=T^{-1} \tilde{A}T\). The matrix elements are preserved:
\begin{equation}\label{eq:K-canonical}
    \langle \tilde{x},\tilde{A} \tilde{y}\rangle_K
    =
    \langle T^{-1}x,(T^{-1}AT)T^{-1}y\rangle .
\end{equation}
Thus the passage to initial coordinates is not only a change of coordinates for the metric, but a simultaneous transport of the full representation. By Proposition 1, this transport gives the isomorphic relation between these two Hilbert-spaces. 

Disregarding these linear-algebraic notions, or overlooking the full mathematical framework underlying them, may give rise to conceptual and formal inconsistencies in the analysis of systems governed by non-Hermitian Hamiltonians.

\section{A two-level non-Hermitian spin model}\label{two-level-model}

We now apply the metric-transport constructed in the previous
section to the two-dimensional reduction of the collective-spin
Hamiltonian studied in Ref.~\cite{Ram19}. This model arises as a finite-dimensional non-Hermitian spin Hamiltonian describing a system of \(N\) collective spins interacting through a dissipative one-axis twisting mechanism \cite{kitagawa,Schlosshauer}. In its general form,
the Hamiltonian can be written as
\begin{equation}
    H=-\frac{\omega}{2}+H_{\mathrm{OAT}}+H_d,
\end{equation}
where
\begin{equation}
    H_{\mathrm{OAT}}=-\frac{\lambda}{2N}S_z^2,
    \qquad
    H_d=2i\kappa S_x .
\end{equation}
Here \(S=(S_x,S_y,S_z)\) denotes the collective spin operator of the
system, whose components satisfy the usual angular momentum commutation
relations $ [S_i,S_j]=i\epsilon_{ijk}S_k $.
The corresponding Hilbert space has dimension \(2S+1\), so that the
model provides a natural finite-dimensional setting in which the role of
metric structures can be analyzed explicitly.
The term \(H_{\mathrm{OAT}}\) represents the one-axis twisting interaction,
which is commonly associated with spin squeezing effects. The
term \(H_d\), being non-Hermitian, introduces an effective dissipative
contribution. Depending on the values of the parameters \(\lambda\) and \(\kappa\), the spectrum may be real, may
contain complex-conjugate pairs, or may develop exceptional points. For
this reason, the model is particularly useful for illustrating the
difference between quasi-Hermitian regimes, broken-symmetry regimes and
singular exceptional-point limits.

In what follows, we focus on the two-dimensional representation of this
Hamiltonian. This reduction is sufficient to display explicitly all the
main mechanisms discussed in the previous section. The same formal procedure can be reproduced in higher-dimensional spin representations \cite{alvarez2025uncertainty}.

In a two-dimensional Hilbert space, the model is given by
\begin{equation}\label{eq:H2}
H_2(\omega,\lambda,\kappa)=
\begin{pmatrix}
\dfrac{\lambda}{4}-\dfrac{\omega}{2} & -i\kappa\\[1mm]
-i\kappa & -\dfrac{\lambda}{4}-\dfrac{\omega}{2}
\end{pmatrix}.
\end{equation}
Its eigenvalues are
\begin{equation}\label{eq:eigs}
    E_\pm=-\frac{\omega}{2}\pm\frac{1}{4}\sqrt{\Delta},
    \qquad
    \Delta=\lambda^2-16\kappa^2.
\end{equation}
Thus the model has three different spectral regimes: the exact or
unbroken phase, where \(\Delta>0\); the spontaneously broken phase,
where \(\Delta<0\); and the exceptional point, where \(\Delta=0\).
The aim of this section is not only to construct metrics in these
regimes, but to show how the interpretation of these metrics changes
according to the spectral properties of \(H_2\).

A general Hermitian intertwining operator used in the two-level model can be
written as
\begin{equation}\label{S-param}
S=
\begin{pmatrix}
\beta & i\alpha\\[1mm]
-i\alpha &\quad  \dfrac{\alpha\lambda}{2\kappa}-\beta
\end{pmatrix},
\qquad
\alpha,\beta\in\mathbb R,
\end{equation}
and it satisfies the pseudo-Hermiticity relation
\begin{equation}
    S H_2=H_2^\dagger S .
\end{equation}
The eigenvalues of \(S\) are
\begin{equation}\label{eq:S-eigs}
    \sigma_\pm=\frac{\alpha\lambda\pm\Theta}{4\kappa},
    \qquad
    \Theta=\sqrt{
    16(\alpha^2+\beta^2)\kappa^2
    -8\alpha\beta\kappa\lambda
    +\alpha^2\lambda^2}.
\end{equation}
Therefore the sign of \(S\) depends on the region of parameter space.
This point is essential: when \(S\) is positive, it may define a
Hilbert-space metric compatible with \(H_2\); when \(S\) is indefinite,
its positive modulus defines a geometry, but not
necessarily a quasi-Hermitian representation of the same Hamiltonian.

\subsection{Exact/unbroken phase}

In the exact phase, \(\Delta>0\), the spectrum of \(H_2\) is real. Let
\(\delta=\sqrt{\Delta}\), and the left eigenvectors $|L_{i} \rangle$ are given by
\begin{equation}
|L_{1}\rangle=\{-\frac{i (\delta -\lambda )}{4 \kappa },1\} \quad |L_{2}\rangle= \{ \frac{i (\delta +\lambda )}{4 \kappa },1 \}    
\end{equation}
Then the operator $S$ given in Eq.(\ref{S}), becomes
$$S_{0}= \left(
\begin{array}{cc}
 \frac{\lambda ^2}{4 \kappa ^2}-2 & \frac{i \lambda }{2 \kappa } \\
 -\frac{i \lambda }{2 \kappa } & 2 \\
\end{array}
\right)$$
in coincidence with Eq.(\ref{S-param}), for $\alpha= \frac{\lambda }{2 \kappa }$ and $\beta= \frac{\lambda ^2}{4 \kappa ^2}-2$.

This is the standard quasi-Hermitian situation. In this case, the intertwiner \(S\) in Eq.~\ref{S-param} can be chosen
positive, then \(S\) itself defines another positive
metric for the same real-spectrum Hamiltonian. In that case the two
metric representations, \(I\) and \(S\), are related
by congruence. Namely, there exists \(T\in GL(2)\) such that
\begin{equation}
    T. S.T^{\dagger} =I.
\end{equation}
where 
$$T=\frac{\sqrt{2}}{r \sqrt{\lambda ^4-r^2}}\left(
\begin{array}{cc}
 \kappa  \left(r r_{+}+\Delta ^2 r_{-}\right) & 4 i \kappa ^2 \lambda  r_{-} \\
 -4 i \kappa ^2 \lambda  r_{-} & \kappa  \left(r r_{+}-\Delta ^2 r_{-}\right) \\
\end{array}
\right)$$
being 
\begin{eqnarray}
r&=&\sqrt{256 \kappa ^4-16 \kappa ^2 \lambda ^2+\lambda ^4}\\
r_{\pm}&=& \sqrt{\lambda ^2-r} \pm \sqrt{\lambda ^2+r}
\end{eqnarray}
According to Proposition~\ref{prop1}, the two
descriptions are equivalent only if states and operators are transported
together:
\begin{equation}
    \tilde{x}=Tx,
    \qquad
    \tilde{A}=T AT^{-1}.
\end{equation}
Then, all matrix elements are preserved:
\begin{equation}
    \langle x,Ay\rangle_S
    =
    \left\langle
    T x,
    \left(T A T^{-1}\right)T y
    \right\rangle = 
    \left\langle
    \tilde{x},
    \tilde{A} \tilde{y}
    \right\rangle .
\end{equation}
Thus, in the unbroken phase, the freedom in the choice of a positive
metric does not create an ambiguity if the complete metric
representation is transported consistently. Expectation values,
variances and Robertson-type uncertainty relations are then preserved
under the change of metric representation.

\subsection{Spontaneously broken phase}

We now consider the spontaneously broken phase, \(\Delta<0\). Writing
\begin{equation}
    \delta_b=\sqrt{16\kappa^2-\lambda^2},
\end{equation}
the eigenvalues are
\begin{equation}\label{eq:eigs-broken}
    E_\pm=-\frac{\omega}{2}\pm\frac{i}{2}\delta_b .
\end{equation}
Thus the Hamiltonian has a complex-conjugate pair of eigenvalues. This
already implies that \(H_2\) cannot be self-adjoint with respect to any
positive definite metric. Indeed, if there existed \(\eta>0\) such that
\begin{equation}
    \eta H_2=H_2^\dagger\eta,
\end{equation}
then
\begin{equation}
    h=\eta^{1/2}H_2\eta^{-1/2}
\end{equation}
would be Hermitian in the reference Hilbert space. Consequently, the
spectrum of \(H_2\) would have to be real, contradicting
Eq.~\eqref{eq:eigs-broken}.

This spectral condition is also visible at the level of the
intertwiner \(S\). Since
\begin{equation}
    \Theta^2=(\alpha\lambda-4\beta\kappa)^2+16\alpha^2\kappa^2,
\end{equation}
one has, in the broken region \(|\lambda|<4|\kappa|\) and for
\(\alpha\neq 0\),
\begin{equation}
    \Theta>|\alpha\lambda|.
\end{equation}
Therefore, the eigenvalues of  \(S\) in Eq.(\ref{eq:S-eigs}) have opposite signs
\begin{equation}
    \kappa>0\Rightarrow \sigma_-<0< \sigma_+,
    \qquad
    \kappa<0\Rightarrow \sigma_+ < 0 < \sigma_-.
\end{equation}
and $S$ is not positive definite. It defines instead
an indefinite metric structure.

Following the Krein-type construction discussed in the previous
sections, we diagonalize \(S\) as $S=VDV^\dagger$,
and define its positive modulus by replacing the eigenvalues of \(S\)
with their absolute values:
\begin{equation}\label{eq:SK-model}
    K=V \tilde{D}V^\dagger.
\end{equation}
Explicitly, the diagonalizing unitary matrix may be written as
\begin{equation}
    V=
    \left(
\begin{array}{cc}
 -\dfrac{i (\alpha  \lambda -4 \beta  \kappa +\Theta )}
 {\sqrt{2 \Theta ^2-\Theta  (8 \beta  \kappa -2 \alpha  \lambda )}}
 &
 \dfrac{i (-\alpha  \lambda +4 \beta  \kappa +\Theta )}
 {\sqrt{\Theta  (8 \beta \kappa -2 \alpha  \lambda )+2 \Theta ^2}}
 \\[3mm]
 \dfrac{4 \alpha  \kappa }
 {\sqrt{2 \Theta ^2-\Theta  (8 \beta  \kappa -2 \alpha  \lambda )}}
 &
 \dfrac{4 \alpha  \kappa }
 {\sqrt{\Theta  (8 \beta \kappa -2 \alpha  \lambda )+2 \Theta ^2}}
\end{array}
\right).
\end{equation}
The Krein-positivized metric is then
\begin{equation}\label{eq:SK-explicit}
    K=
    \frac{1}{\Theta}
    \left(
\begin{array}{cc}
 s & i \alpha ^2 \lambda  \\[1mm]
 -i \alpha ^2 \lambda  &
 s+\tfrac{\alpha ^2 \lambda ^2}{2 \kappa }-2 \alpha  \beta  \lambda
\end{array}
\right),
\end{equation}
where $s=4\kappa(\alpha^2+\beta^2)-\alpha\beta\lambda$. 

The main point is that \(K\) is positive, but positivity alone
does not imply quasi-Hermiticity of \(H_2\). Indeed, $K H_2\neq H_2^\dagger K $.
This is exactly the fact identified in Proposition~\ref{Jcriterion}.
Equivalently, if
\begin{equation}
    S=KJ,
    \qquad
    J=\operatorname{sgn}(S)=\left(
\begin{array}{cc}
 -1 & 0 \\
 0 & 1 \\
\end{array}
\right),
\end{equation}
then
\begin{equation}
    H_2^{\sharp K}=JH_2J .
\end{equation}
Thus \(K\) is a quasi-Hermitian metric for \(H_2\) if and only if
\begin{equation}
    [J,H_2]=0.
\end{equation}

The two positive geometries \(I\) and \(K\) can 
be related by the transport rule of
Proposition~\ref{prop1}. If \(T_{\rm br}\in GL(2)\) is
chosen so that
\begin{equation}
    T_{br}K. T_{br}^{\dagger}=I,
\end{equation}
then the corresponding dictionary is
\begin{equation}
    x_{\rm br}=T_{br}x,
    \qquad
    A_{\rm br}=T_{br}AT_{br}^{-1}.
\end{equation}
where 
\begin{equation}
T_{br}=\frac{\sqrt{\kappa }}{\Theta  \sqrt{\Theta ^2-\alpha ^2 \lambda ^2}}  \left(
\begin{array}{cc}
 \Theta_{-} (4 \beta  \kappa -\alpha  \lambda
   )+\Theta  \Theta_{+} & 4 i \alpha  \kappa \Theta_{-} \\
 -4 i \alpha  \kappa  \Theta_{-} & \Theta_{-} (\alpha  \lambda -4 \beta  \kappa )+\Theta \Theta_{+} \\
\end{array}
\right)
\end{equation}
being $\Theta_{\pm}= \sqrt{\Theta -\alpha  \lambda } \pm \sqrt{\Theta +\alpha  \lambda}$
Thus
\begin{equation}
    \langle x,Ay\rangle_{K}
    =
    \left\langle
    T_{br}x,
    \left(T_{br}AT_{br}^{-1}\right)T_{br}y
    \right\rangle = 
    \left\langle
    x_{br},
    A_{br}y_{br} \right\rangle.
\end{equation}
This equation is the concrete form, for the broken two-level model, of
the general metric-transport principle. It shows that different metrics
can be compared only if the states and the operators are transported
together. If one changes the metric but keeps the same coordinate
representatives of states and observables, one is mixing different
Hilbert-space geometries. Apparent anomalies in expectation values,
variances or Robertson-type uncertainty relations may then be artifacts
of this mismatch rather than genuine physical effects \cite{mandel}.

\subsection{Exceptional point}

At the exceptional point the Hamiltonian is not diagonalizable and
its eigenvectors coalesce \cite{Starkov}. We consider the boundary
\begin{equation}
    \Delta=\lambda^2-16\kappa^2=0,
\end{equation}
and, for simplicity, take $\lambda=4\kappa$ and $\kappa\neq 0$.
The Hamiltonian and its adjoint are then
\begin{equation}
H_{\rm EP}
=
\begin{pmatrix}
\kappa-\dfrac{\omega}{2} & -i\kappa\\[1mm]
-i\kappa & -\kappa-\dfrac{\omega}{2}
\end{pmatrix},
\qquad
H_{\rm EP}^{\dagger}
=
\begin{pmatrix}
\kappa-\dfrac{\omega}{2} & i\kappa\\[1mm]
i\kappa & -\kappa-\dfrac{\omega}{2}
\end{pmatrix}.
\end{equation}
Both eigenvalues collapse into
\begin{equation}
    E_{\rm EP}=-\frac{\omega}{2},
\end{equation}
and \(H_{\rm EP}\) has a second-order exceptional point.

The Hermitian intertwining operator obtained from Eq.~\eqref{SB} at the
exceptional point is
\begin{equation}
S_{\rm EP}
=
\begin{pmatrix}
\beta & i\alpha\\[1mm]
-i\alpha & -\beta+2\alpha
\end{pmatrix}.
\end{equation}
Its eigenvalues are
\begin{equation}
        s_\pm
        =
        \alpha\pm\sqrt{\alpha^2+(\beta-\alpha)^2}.
\end{equation}
Hence, for \(\alpha\neq\beta\), \(S_{\rm EP}\) is indefinite. As in the
broken phase, one may pass from this indefinite form to a positive
operator by Krein positivization.

For \(\alpha\neq\beta\), a positive geometry can be obtained from the
Krein-positivized form. It can be related to the standard Hilbert-space
geometry by a congruence transformation. In the present two-dimensional
case one may choose
\begin{equation}
T= \frac{1}{2q(\alpha - \beta)}
\begin{pmatrix}
 q q_{+}-(\alpha -\beta ) q_{-} & i \alpha  q_{-} \\[1mm]
 -i \alpha  q_{-} & \alpha  q_{+}-\beta  (q_{+}+q_{-})
\end{pmatrix},
\end{equation}
where
\begin{equation}
q= \sqrt{2 \alpha ^2-2 \alpha  \beta +\beta ^2},
\qquad
q_{\pm}= \sqrt{q - \alpha} \pm  \sqrt{q + \alpha}.
\end{equation}
This operator implements the metric transport between the positive
geometry associated with the exceptional-point limit and the reference
Hilbert space, provided the same transport is applied simultaneously to
states and operators.

\section{Uncertainty relation under metric transport}\label{uncertainty}

We now examine uncertainty relations from the viewpoint of metric
transport \cite{uncert,alvarez2025uncertainty,MacconePati2014,URNH}.
The aim is not to derive a new uncertainty principle for non-Hermitian
systems, but to emphasize that uncertainty inequalities are
representation-consistent only when expectation values, variances, states,
adjoints and observables are evaluated within the same metric structure.

The standard Heisenberg--Robertson relation bounds the product of
variances from below by the expectation value of the commutator. However,
as was emphasized in Ref.~\cite{MacconePati2014}, product-form uncertainty
relations may become trivial when one of the variances vanishes, even if
the observables are not jointly sharp on the state. This motivates stronger
relations involving sums of variances, such as
\(\Delta A^2+\Delta B^2\), whose lower bounds may remain nontrivial in
these limiting situations.

Uncertainty relations for non-Hermitian operators have been discussed from
several viewpoints, including formulations based on metric-compatible or
``good'' observables \cite{mandel,Bag23,URNH}. The point emphasized here is
complementary. Even standard Robertson-type inequalities, or their stronger
sum-variance versions, may appear to fail if the metric is changed while
the same coordinate representatives of states and observables are kept
fixed. Such an apparent failure does not indicate a violation of the
uncertainty principle, but rather a mismatch between different metric
representations.

Let \(\eta_1>0\), \(\eta_2>0\), and let \(T\in GL(n)\) be such that
\begin{equation}
    \eta_1=T^\dagger\eta_2T .
\end{equation}
Under the simultaneous transport
\begin{equation}\label{diccionario}
    \tilde{x}=Tx,
    \qquad
    \tilde{A}=TAT^{-1},
    \qquad
    \tilde{B}=TBT^{-1},
\end{equation}
the identities derived in the previous section give
\begin{equation}
    \langle A\rangle_{\eta_1,x}
    =
    \langle \tilde{A}\rangle_{\eta_2,\tilde{x}},
    \qquad
    \Delta^{2}_{\eta_1,x}A
    =
    \Delta^{2}_{\eta_2,\tilde{x}}\tilde{A},
\end{equation}
and similarly for \(B\). Here, expectation values are always understood as
normalized metric expectation values:
\begin{equation}
    \langle A\rangle_{\eta,x}
    =
    \frac{\langle x,\eta A x\rangle}{\langle x,\eta x\rangle}.
\end{equation}
The commutator and covariance terms are transported in the same way:
\begin{equation}
    \langle[A,B]\rangle_{\eta_1,x}
    =
    \langle[\tilde A,\tilde B]\rangle_{\eta_2,\tilde x}.
\end{equation}
Therefore,
\begin{equation}
    \Delta_{\eta_1,x} A\,\Delta_{\eta_1,x} B
    \geq
    \frac12
    \left|
    \langle[A,B]\rangle_{\eta_1,x}
    \right|
\end{equation}
holds if and only if
\begin{equation}
    \Delta_{\eta_2,\tilde{x}} \tilde{A}\,
    \Delta_{\eta_2,\tilde{x}} \tilde{B}
    \geq
    \frac12
    \left|
    \langle[\tilde{A},\tilde{B}]\rangle_{\eta_2,\tilde{x}}
    \right|
\end{equation}
is valid in the transported representation.

The same observation applies to sum uncertainty relations. The Maccone-Pati type inequality in the Hilbert space $\mathcal{H}_{\eta}$ is given by 
\cite{MacconePati2014}
\begin{equation}
    \Delta_{\eta,x}^2 A+\Delta_{\eta,x}^2 B
    \geq
    \pm i\langle[A,B]\rangle_{\eta,x}
    +
    \left|
    \langle x,(A\pm iB)x^\perp\rangle_{\eta}
    \right|^2.
\end{equation}
Here
\(x^\perp\) satisfies  $ \langle x,x^\perp\rangle_\eta=0$,
and the sign is chosen so that the commutator contribution is
non-negative. A second sum-type bound is
\begin{equation}
    \Delta_{\eta,x}^2 A+\Delta_{\eta,x}^2 B
    \geq
    \frac12
    \left|
    \langle x^\perp_{A+B},(A+B)x\rangle_{\eta}
    \right|^2 ,
\end{equation}
where
\begin{equation}
    x^\perp_{A+B}
    =
    \frac{
    (A+B-\langle A+B\rangle_{\eta,x})x
    }{
    \Delta_{\eta,x}(A+B)
    }
\end{equation}
whenever \(\Delta_{\eta,x}(A+B)\neq0\). Equivalently, this last inequality
can be written as
\begin{equation}
    \Delta_{\eta,x}^2 A+\Delta_{\eta,x}^2 B
    \geq
    \frac12
    \Delta_{\eta,x}^2(A+B).
\end{equation}
As in the case of the Robertson inequality, the construction of the new Hilbert space endowed with $\eta_{2}$ through Eq.~\ref{diccionario}, together with the corresponding transport of states and operators, guarantees that analogous sum-uncertainty relations remain valid in the transported metric representation. 

\subsection{Example}

The purpose of this example is not to construct a counterexample to the
Robertson--Heisenberg uncertainty relation. Rather, it is to show explicitly
how an apparent violation may arise when the metric is changed but the
coordinate representatives of the state and of the operators are kept fixed.
In such a situation, the quantities entering the uncertainty relation are no
longer evaluated within a single metric representation.

We now specialize the example discussed in Section~\ref{two-level-model} to the broken-symmetry
region by considering the particular parameter choice \(\lambda=12\kappa/5\).
For this specific case, the matrix constructed in Eq.~\eqref{eq:SK-explicit}
is given by

$$
K(\alpha,\beta)=
\frac{1}{\sqrt{34\alpha^2-30\alpha\beta+25\beta^2}}
\begin{pmatrix}
5\alpha^2-3\alpha\beta+5\beta^2 & 3i\alpha^2 \\[1mm]
-3i\alpha^2 &
\frac{43\alpha^2-45\alpha\beta+25\beta^2}{5}
\end{pmatrix}.
$$

For compactness, we introduce

$$
u=\sqrt{25\alpha^2-30\alpha\beta+25\beta^2},
\qquad
v=\sqrt{34\alpha^2-30\alpha\beta+25\beta^2},
$$
$$
c=5\alpha^2-3\alpha\beta+5\beta^2,
\qquad
d=43\alpha^2-45\alpha\beta+25\beta^2.
$$

We take

$$
A=\sigma_x,
\qquad
B=\sigma_y,
\qquad
\psi_K=N_K(x,y)^T,
\qquad
N_K=\left(\frac{c_\psi}{5v}\right)^{-1/2}
=\sqrt{\frac{5v}{c_\psi}},
\qquad
x,y\in\mathbb{R},
$$

with $x^2+y^2\neq 0$. With these definitions, the $K$-norm of the fixed
coordinate vector is

$$
\langle \psi_K,\psi_K\rangle_K=1,
\qquad
c_\psi=5cx^2+dy^2.
$$

Let us first perform the computation in the inconsistent way. Namely, we
change the metric from the canonical one to $K(\alpha,\beta)$, but keep
fixed the same coordinate representatives $A$, $B$, and $\psi$. Using the
$K$-metric expectation values, one finds

$$
\langle A\rangle_{K,\psi_K}=
\frac{
15i\alpha^2(x^2-y^2)+2v^2xy
}
{c_\psi}.
$$

Similarly,

$$
\langle B\rangle_{K,\psi_K}=
\frac{
-15\alpha^2(x^2+y^2)
+
6i\alpha(3\alpha-5\beta)xy
}
{c_\psi}.
$$

These expectation values are not, in general, real. This already indicates
that the fixed Pauli representatives $A=\sigma_x$ and $B=\sigma_y$ are not
observables in the Hilbert space endowed with the metric $K$.

The corresponding variances are

$$
\Delta_{K,\psi_K}^2 A=
\frac{
u^2v^2(x^2-y^2)^2
}
{c_\psi^2}.
$$

For the second observable one obtains

$$
\Delta_{K,\psi_K}^2 B=
\frac{
u^2v^2(x^2+y^2)^2
}
{c_\psi^2}.
$$

Therefore,

$$
\Delta_{K,\psi_K}A\Delta_{K,\psi_K}B=
\frac{
u^2v^2|x^4-y^4|
}
{c_\psi^2}.
$$

On the other hand, the commutator term evaluated with the same fixed
representatives gives

$$
\frac12
\left|
\langle[A,B]\rangle_{K,\psi_K}
\right|=
\frac{
\sqrt{
900\alpha^4x^2y^2+
\left(5cx^2-dy^2\right)^2
}
}
{c_\psi}.
$$

Equivalently, the difference between the squared left-hand side and the
squared right-hand side of the Robertson product inequality is

$$
\left(
\Delta_{K,\psi_K}A\Delta_{K,\psi_K}B
\right)^2
-\frac14
\left|
\langle[A,B]\rangle_{K,\psi_K}
\right|^2
=
\frac{
u^4v^4(x^4-y^4)^2
}
{c_\psi^4}
-\frac{
900\alpha^4x^2y^2+
\left(5cx^2-dy^2\right)^2
}
{c_\psi^2}.
$$

This quantity can be negative for suitable values of the parameters and of
the state. Thus, in the non-transported computation one may obtain the
apparent violation

$$
\Delta_{K,\psi_K}A\Delta_{K,\psi_K}B
<
\frac12
\left|
\langle[A,B]\rangle_{K,\psi_K}
\right|.
$$

This apparent violation occurs because the operators $A$ and $B$ have not
been transported together with the metric. In particular, with respect to
the $K$-metric adjoint

$$
X^{\sharp_K}=K^{-1}X^\dagger K,
$$

the fixed Pauli representatives are not, in general, $K$-self-adjoint:

$$
A^{\sharp_K}\neq A,
\qquad
B^{\sharp_K}\neq B.
$$

Thus the Robertson-Heisenberg inequality is being applied outside the
metric representation in which its hypotheses are satisfied.

Let us now repeat the computation consistently. We use the metric transport
operator $T$ defined by

$$
\eta_1=T^\dagger K T,
$$

where $\eta_1$ denotes the initial metric representation. In the present
case, the initial representation is the canonical one, so one may take
$\eta_1=I$. The state and operators must be transported simultaneously
according to

$$
\psi_0=\frac{1}{\sqrt{x^2+y^2}}(x,y)^T,
\qquad
\widetilde{\psi}=T\psi_0,
\qquad
\widetilde{A}=TAT^{-1},
\qquad
\widetilde{B}=TBT^{-1}.
$$

In the $K$-metric representation, the corresponding adjoints are

$$
\widetilde{A}^{\sharp_K}=K^{-1}\widetilde{A}^\dagger K,
\qquad
\widetilde{B}^{\sharp_K}
=K^{-1}\widetilde{B}^\dagger K.
$$

A direct calculation gives

$$
\widetilde{A}^{\sharp_K}=\widetilde{A}
=\frac{1}{uv}
\begin{pmatrix}
-15i\alpha^2 &
d\\
5c &
15i\alpha^2
\end{pmatrix}.
$$

Likewise,

$$
\widetilde{B}^{\sharp_K}=\widetilde{B}
=\frac{1}{v^2(u^2+uv)}
\begin{pmatrix}
45\alpha^3(5\beta-3\alpha) &
\frac{i}{2}
\left[
225\alpha^4-
\left(d+uv\right)^2
\right]\\
-\frac{i}{2}
\left[
225\alpha^4-
\left(5c+uv\right)^2
\right]
&
45\alpha^3(3\alpha-5\beta)
\end{pmatrix}.
$$

Therefore, the transported operators are genuine observables in the
transported metric representation.

The corresponding expectation values are

$$
\langle \widetilde{A}\rangle_{K,\widetilde{\psi}}=
\frac{2xy}{x^2+y^2},
\qquad
\langle \widetilde{B}\rangle_{K,\widetilde{\psi}}=0.
$$

Moreover, the variances are

$$
\Delta_{K,\widetilde{\psi}}^2\widetilde{A}=
\frac{(x^2-y^2)^2}{(x^2+y^2)^2},
\qquad
\Delta_{K,\widetilde{\psi}}^2\widetilde{B}=1.
$$

Thus

$$
\Delta_{K,\widetilde{\psi}}\widetilde{A}
\Delta_{K,\widetilde{\psi}}\widetilde{B}
=
\frac{|x^2-y^2|}{x^2+y^2}.
$$

The commutator term is

$$
\frac12
\left|
\langle[\widetilde{A},\widetilde{B}]
\rangle_{K,\widetilde{\psi}}
\right|
=
\frac{|x^2-y^2|}{x^2+y^2}.
$$

Consequently,

$$
\Delta_{K,\widetilde{\psi}}\widetilde{A}
\Delta_{K,\widetilde{\psi}}\widetilde{B}
=
\frac12
\left|
\langle[\widetilde{A},\widetilde{B}]
\rangle_{K,\widetilde{\psi}}
\right|.
$$

With the transported state and operators, the Robertson--Heisenberg
relation is restored. In this example, the transported computation saturates
the Robertson product inequality.

This completes the comparison between the inconsistent and the transported
computations. The apparent failure of the Robertson product form comes from
mixing inequivalent metric representations.


\section{Conclusions}\label{conclusiones} 

We have analyzed the consistency of metric representations in finite-dimensional non-Hermitian quantum mechanics. The main point of the work is to stress that the physical description of the system can be given in different Hilbert spaces, through the construction of an isomorphism. 

For pseudo-Hermitian operators, we considered  theirKrein decomposition
\[
S=KJ,
\]
where \(K=|S|\) is positive and \(J=\operatorname{sgn}(S)\) is the associated fundamental symmetry. Although \(K\) defines a positive Hilbert-space geometry, it does not
necessarily define a quasi-Hermitian metric for the original Hamiltonian. Indeed, the \(K\)-adjoint of \(H\) is given by
\[
H^{\sharp_K}=JHJ,
\]
and therefore \(K\) is compatible with \(H\) as a quasi-Hermitian metric if and only if
\[
[J,H]=0.
\]
Nevertheless, $K$ is a metric and can be used to define a new Hilbert space $\mathcal{H}_{K}$.

In Section~\ref{theorem} explicit isomorphism between the Hilbert space endowed  with the Krein-positivized metric $K=|S|$ and the usual Hilbert space has been constructed.
That means that both Hilbert spaces can be used in defining the dynamics of non-Hermitian systems.

In Section~\ref{two-level-model} a two-level non-Hermitian spin model has been used to illustrate the main statements of our theoretical discussion. In the unbroken regime, where the spectrum is real, a metric may be used to obtain a quasi-Hermitian representation of the Hamiltonian. In the spontaneously broken regime and at the exceptional points, the Krein-positivized operator \(K\) still defines a positive geometry, but the Hamiltonian is not \(K\)-self-adjoint in the parameter space of the model.

We also used the Robertson uncertainty relation as consistency test. Once states, observables, and their adjoints are evaluated within a single metric Hilbert space, the uncertainty relations retain their physical meaning.

In this sense, our results provide a geometric interpretation of the good-observable prescription of Ref.~\cite{mandel}: good observables are precisely the operators that belong to a given metric representation, or equivalently those obtained by transporting ordinary Hermitian observables through the Hilbert-space isomorphism. This shows that the validity of uncertainty relations is not tied to a fixed coordinate representation, but to the consistent transport of the metric, the states, the adjoint operation and the observables.

The finite-dimensional setting considered here avoids domain subtleties and makes the transport mechanism explicit. In infinite-dimensional systems, additional conditions on domains, boundedness, invertibility, and metric dynamics are required. 

\appendix

\section{}

The purpose of the following lemma is to separate two issues that are often 
identified too quickly. Although the Krein-positivized operator \(K=|S|\) is a 
positive metric, the original non-Hermitian Hamiltonian need not be 
\(K\)-self-adjoint. Hence the admissible observables in the Hilbert space 
\(\mathcal H_K\) must be characterized independently. The lemma below provides 
this characterization for the two-dimensional broken-symmetry metric used in 
the main text.

\begin{lemma}\label{lema}
Let \(K\) be the positive operator obtained from the indefinite
intertwiner \(S\) in the broken region \(\Delta<0\). A matrix
\[
\widehat H=
\begin{pmatrix}
a_{11} & a_{12}\\
a_{21} & a_{22}
\end{pmatrix}
\]
is \(K\)-self-adjoint, namely
\begin{equation}
    K\widehat H=\widehat H^\dagger K,
\end{equation}
if and only if it belongs to the following real four-parameter family:
\begin{align}
a_{11}
&=
a-\frac{2 i \alpha ^2 b \kappa  \lambda }
{8 \kappa ^2 \left(\alpha ^2+\beta ^2\right)+\alpha ^2 \lambda ^2-6 \alpha  \beta  \kappa  \lambda },
\\[0.4em]
a_{12}
&=
b+i b_{1},
\\[0.4em]
a_{21}
&=
\frac{2 b \kappa  \left(4 \kappa  \left(\alpha ^2+\beta ^2\right)-\alpha  \beta  \lambda \right)}
{8 \kappa ^2 \left(\alpha ^2+\beta ^2\right)+\alpha ^2 \lambda ^2-6 \alpha  \beta  \kappa  \lambda }
+i c_{1},
\\[0.4em]
a_{22}
&=
2a-a_{11}
-\frac{4 \kappa  \left(\alpha ^2+\beta ^2\right) (b_{1}+c_{1})}{\alpha ^2 \lambda }
+\frac{\beta  (b_{1}+3 c_{1})}{\alpha }
-\frac{c_{1} \lambda }{2 \kappa },
\end{align}
where \(a,b,b_1,c_1\in\mathbb R\).
\end{lemma}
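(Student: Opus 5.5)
The plan is to treat $K\widehat H=\widehat H^\dagger K$ as a real-linear system on the eight real parameters of $\widehat H$ and compute its solution space explicitly. Write the explicit Krein-positivized metric \eqref{eq:SK-explicit} as
\[
K=\begin{pmatrix} k_1 & i k_3\\ -i k_3 & k_2\end{pmatrix},
\]
where
\[
k_1=\frac{s}{\Theta},\qquad
k_2=\frac{1}{\Theta}\left(s+\frac{\alpha^2\lambda^2}{2\kappa}-2\alpha\beta\lambda\right),\qquad
k_3=\frac{\alpha^2\lambda}{\Theta}
\]
are real. The overall factor $1/\Theta$ cancels from the intertwining relation, so only the ratios $k_1:k_2:k_3$ matter. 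Set $M:=K\widehat H$. Since $K^\dagger=K$, the condition $K\widehat H=\widehat H^\dagger K$ is equivalent to $M=M^\dagger$. This reduces the problem to four real conditions: $M_{11}$ and $M_{22}$ are real, and $M_{12}=\overline{M_{21}}$ (one complex equation). On an eight-dimensional real space this should leave a four-dimensional real solution space, matching the four parameters $a,b,b_1,c_1$.

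The first step is to expand the four conditions:
\begin{align*}
\operatorname{Im}(k_1a_{11}+ik_3a_{21})&=0,\\
\operatorname{Im}(-ik_3a_{12}+k_2a_{22})&=0,\\
k_1a_{12}+ik_3a_{22}&=\overline{-ik_3a_{11}+k_2a_{21}}.
\end{align*}
Next, parametrize $a_{12}=b+ib_1$ and $\operatorname{Im}a_{21}=c_1$ as free parameters. The reality conditions on the diagonal then fix $\operatorname{Im}a_{11}$ in terms of $\operatorname{Re}a_{21}$, and $\operatorname{Im}a_{22}$ in terms of $b$. The complex off-diagonal equation splits into a real and an imaginary part. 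These determine $\operatorname{Re}a_{21}$ as a multiple of $b$, and a linear relation between $\operatorname{Re}a_{22}$, $\operatorname{Re}a_{11}$, $b_1$ and $c_1$.

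Introducing $a$ via the trace, $a_{11}+a_{22}=2a+\dots$, should reproduce the stated form. Concretely, the denominator $8\kappa^2(\alpha^2+\beta^2)+\alpha^2\lambda^2-6\alpha\beta\kappa\lambda$ should arise as $2\kappa\Theta\cdot(k_1+k_2)$ up to normalization, or from a combination such as $k_1k_2-k_3^2$. The coefficient $4\kappa(\alpha^2+\beta^2)-\alpha\beta\lambda=s$ is the numerator of $k_1$. I would match these by substituting $s$ and simplifying with $\Theta^2=(\alpha\lambda-4\beta\kappa)^2+16\alpha^2\kappa^2$.

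For the converse, I would substitute the stated family back into $M-M^\dagger$ and check that all four real components vanish identically in $a,b,b_1,c_1$. I would also confirm that the map from $(a,b,b_1,c_1)$ is injective: $a_{12}$ recovers $b$ and $b_1$, $\operatorname{Im}a_{21}$ recovers $c_1$, and $a$ is then recovered from $a_{11}$. Together these show the family is exactly the solution set. As a consistency check, the identity should be $K$-self-adjoint, and by Proposition~\ref{Jcriterion} so should $J H_2 J$-type combinations such as $(H_2+H_2^{\sharp_K})/2$. The main obstacle is purely algebraic: bringing the solved coefficients into the paper's normalized form, especially $a_{22}$, whose expression contains $1/\alpha^2\lambda$ and $1/\alpha$ terms. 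For that step I would solve for $\operatorname{Re}a_{22}$ directly from the imaginary part of the off-diagonal equation, dividing by $k_3\propto\alpha^2\lambda$. This requires $\alpha\lambda\ne0$, which holds generically in the broken region with $\alpha\neq0$, and I would state that assumption explicitly.
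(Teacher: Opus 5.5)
Your proposal is correct and follows the paper's own argument: substitute a generic $\widehat H$ into $K\widehat H=\widehat H^\dagger K$ and solve the resulting real-linear system. Recasting the condition as Hermiticity of $M=K\widehat H$ is just a tidy way to organize that computation, and it uses the same non-degeneracy assumption $\alpha\lambda\neq0$ that the paper's proof imposes.

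One small correction to your guess about the denominator: it is $2\kappa\Theta k_2$, because carrying out your steps gives $\operatorname{Re}a_{21}=k_1b/k_2$ and $\operatorname{Im}a_{22}=-\operatorname{Im}a_{11}=k_3b/k_2$. By contrast, $2\kappa\Theta(k_1+k_2)$ is actually equal to $\Theta^2$.
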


\begin{proof}
The result follows by inserting a generic matrix
\(\widehat H=(a_{ij})_{i,j=1}^2\) into
\(K\widehat H=\widehat H^\dagger K\) and solving the resulting
linear system for its entries. The displayed parametrization is valid
under the non-degeneracy assumptions
\(\alpha\neq 0\), \(\lambda\neq 0\), \(\kappa\neq 0\), and
\(\Theta\neq 0\). The excluded cases correspond to degenerate limits in
which the metric geometry becomes trivial, the Hamiltonian is Hermitian,
or the original intertwiner is already positive definite.
\end{proof}

This lemma identifies the class of operators that are self-adjoint in
the positive geometry defined by \(K\). In particular, inserting the
entries of the original Hamiltonian \(H_2\) into
Lemma~\ref{lema} gives the condition
\begin{equation}\label{eq:condition-H2-SK}
    \kappa=
    \frac{\alpha\beta\lambda}
    {2(\alpha^2+\beta^2)}.
\end{equation}
Under this condition,
\begin{equation}
\Delta
=
\lambda^2-16\kappa^2
=
\lambda^2
\left[
1-\frac{4\alpha^2\beta^2}{(\alpha^2+\beta^2)^2}
\right]
=
\lambda^2
\frac{(\alpha^2-\beta^2)^2}{(\alpha^2+\beta^2)^2}
\geq 0.
\end{equation}
Therefore \(H_2\) can be \(K\)-self-adjoint only outside the 
broken region. Except for the borderline case
\(\alpha^2=\beta^2\), where \(\Delta=0\), condition
\eqref{eq:condition-H2-SK} places the model in the unbroken regime
\(\Delta>0\). In that regime, however, the original intertwining
operator \(S\) may already be chosen positive, and the Krein
positivization is not needed.


%

\end{document}